\newcommand{\NP}{\textsf{NP}\xspace}
\newcommand{\etal}{{et al.}\xspace}
\newcommand{\sublab}[1]{{{(#1)}}}
\newcommand{\settle}{{\text {\sc Settle}}\xspace}
\newcommand{\fullsettle}{{\text {\sc FullSettle}}\xspace}
\newcommand{\fix}{{\text {\sc Fix}}\xspace}
\newcommand{\desc}[1]{{\text {\tt #1}}}
\newcommand{\solu}[1]{{\text {\tt #1}}}
\def\nocomment{} 
\newcommand{\personaltodo}[4][noinline]{\ifdefined\nocomment\else\todo[#1,color=#2]{#3: #4}\fi}
\newcommand{\soeren}[2][noinline]{\personaltodo[#1]{WildStrawberry!30}{ST}{#2}}
\newcommand{\maarten}[2][noinline]{\personaltodo[#1]{blue!30}{ML}{#2}}
\newcommand{\guenter}[2][noinline]{\personaltodo[#1]{Orange!30}{GR}{#2}}
\newcommand{\alexandra}[2][noinline]{\personaltodo[#1]{teal!30}{AW}{#2}}
\newcommand{\change}{} 
\newcommand{\cutout}[1]{}
\newcommand{\UNP}{\textsc{1-SN}}
\spnewtheorem*{myproblem}{Unique Solution Nonogram}{\bfseries}{\itshape}
\declaretheorem[name=Observation]{observation}
\newcommand{\repeattheorem}[1]{%
  \begingroup
  \renewcommand{\thetheorem}{\ref{#1}}%
  \expandafter\expandafter\expandafter\theorem
  \csname reptheorem@#1\endcsname
  \endtheorem
  \endgroup
}
\xdef\csname reptheorem@#1\endcsname{%
    \unexpanded\expandafter{\BODY}%
  }%
\unskip\label{#1}\endtheorem
	\newcommand{\ldt}{\mathrel{.\,.}}
	\newcommand{\subproblem}[2]{P^{#1}_{#2}}
	\def\subproblem[#1,#2]{P^{#1}_{#2}}
	\def\subproblem[#1,#2]{\textit{Match}^{\,#1}_{#2}}
\let\doendproof\endproof
\renewcommand\endproof{~\hfill\qed\doendproof}
\begin{document}
\title{On Solving Simple Curved Nonograms}
%
%
\author{
Maarten L{\"o}ffler\inst{1}\orcidID{0009-0001-9403-8856}
\and G{\"u}nter~Rote\inst{2}\orcidID{0000-0002-0351-5945}
\and Soeren Terziadis\inst{3}\orcidID{0000-0001-5161-3841}
\and Alexandra Weinberger\inst{4}\orcidID{0000-0001-8553-6661}
}

\authorrunning{M. L\"offler, 
G. Rote, S. Terziadis,
 and
A. Weinberger
}
%
\institute{
	Utrecht University, Utrecht, The Netherlands, \email{m.loffler@uu.nl} \and
	Freie Universität Berlin, Berlin, Germany, \email{rote@inf.fu-berlin.de} \and
	TU Eindhoven, Eindhoven, The Netherlands, \email{s.d.terziadis@tue.nl} \and
	FernUniversität in Hagen, Hagen, Germany, \email{alexandra.weinberger@fernuni-hagen.de}
}
\maketitle              

\soeren{Opinions on the title?}
\alexandra{I like it}
\begin{abstract}
Nonograms are a popular type of puzzle, where an arrangement
of curves in the plane (in the classic version,  a rectangular grid) is given together with a series
of hints, 
indicating which cells of the subdivision are to be
colored. 
The 
colored cells yield an image.
Curved nonograms use a curve arrangement rather than a grid, leading to a closer approximation of an arbitrary solution image.
While there is a considerable amount of previous work on the natural
question of the hardness of solving a classic nonogram,
research on curved nonograms has so far focused on their creation,
which is already highly non-trivial.
We address this gap by providing algorithmic and hardness results for curved nonograms of varying complexity. 
\keywords{Nonogram  \and Arrangement \and Puzzle \and Algorithm \and Complexity}
\end{abstract}


\section{Introduction}
\label{sec:Introduction}

Nonograms, also known as 
{\em Japanese puzzles}, {\em paint-by-numbers}, or {\em griddlers},
are a popular puzzle type where one is given an empty grid in which some
 grid cells are to be colored (filled);
the remaining cells remain empty (unfilled).
For every row and column,
there is a \emph{description} that
constrains the set of colored grid cells in this row or column.
The description specifies how many consecutive blocks of cells should be filled and how large these blocks are.
Two filled blocks need to be separated by one or more unfilled cells.
A solved nonogram typically results in a picture (see Figure~\ref {fig:nonograms}).

Nonograms provide an accessible and contained environment for logical deduction.
They have been used successfully to teach logical thinking~\cite {10.1145/1141904.1141906,10.1007/978-3-319-94619-1_38}, and have been shown to stimulate brain activity to prevent dementia~\cite {kasinathan2020logical}.

Batenburg~\etal~\cite{Batenburg09} introduce the notion of a {\em simple} nonogram, which can be solved efficiently.
A nonogram is {\em simple} when it can be solved by only looking at a single row or column at a time.
More precisely, they consider a nonogram simple if it can be solved by repeatedly considering a row or column, enumerating all possible solutions for it
that are consistent with the fixed cells determined so far,
and fixing all cells which have the same value in any possible solution.
This procedure is called \emph{settling a row/column} (or simply \settle) and will be considered in more detail in the preliminaries (Section~\ref{sec:prelims}).
Note that since repeated application of settling a row or column is a deterministic process,
the existence of multiple solutions for a nonogram immediately implies that it cannot be simple; however, the converse is not true, i.e., there are uniquely solvable nonograms
that are not simple.
In fact, Batenburg and Kosters introduce a 
whole
hierarchy of complexity for nonograms, depending on the number of rows and columns which have to be considered simultaneously (by a specific solver)
in order to definitively identify a cell whose status
can be settled;
puzzles with unique solutions can be found at all levels of this hierarchy.


Nonogram puzzles that appear in newspapers or similar platforms tend to be of this simple type~\cite{BATENBURG20091672},
which contradicts to some extent the popular opinion that all interesting games and puzzles are \textsf{NP}-hard~\cite{cormode2004hardness,viglietta2014gaming}.


\begin{figure}[tbp]
	\centering
	\subfloat[\label{fig:nonograms_a}]{%
		\includegraphics[width=.3\linewidth,page=1]{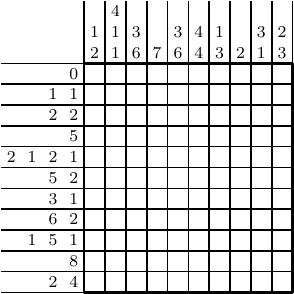}
	}\hfil
	\subfloat[\label{fig:nonograms_b}]{%
		\includegraphics[width=.3\linewidth,page=2]{figures/nonograms}
	}\hfil
	\subfloat[\label{fig:nonograms_c}]{%
		\includegraphics[width=.3\linewidth,page=3]{figures/nonograms}
	}
	\caption{
		\sublab{a} A classic nonogram puzzle. 
		\sublab{b} An inference based on the highlighted clue.
        \sublab{c} The solved nonogram.}
	\label{fig:nonograms}
\end{figure}  

\subsection {Solving Nonograms}

Beside Jan Wolter's online survey~\cite{webpbnSurveyPaintbyNumber}, there has been substantial academic interest in nonograms.
The natural question is to ask whether a given nonogram can be solved or to come up with an algorithmic approach to compute such a solution.
A number of solvers using various strategies have been presented in the literature.
These include heuristic approaches~\cite{salcedo2007solving},
DFS-based solving
methods~\cite{5212614,DBLP:journals/apin/YuLC11,stefani2012solving},
genetic
algorithms~\cite{DBLP:journals/apin/Tsai12,7738765,7521597,alkhraisat2016dynamic,DBLP:journals/icga/ChenL19},
line-by-line solving combined with probing (using low probability
guesses to quickly achieve
contradictions)~\cite{DBLP:journals/dam/BerendPRR14}, \textsf{SAT}
solvers~\cite{DBLP:journals/jair/AMCS13},
integer linear programming~\cite{khan2020solving}, 
and a combination of heuristics and neural networks~\cite{BUADESRUBIO2024100652}.
The performance of two general solving strategies (DFS and so called
soft computing)
has been experimentally 
compared
on a small set of four nonogram instances~\cite{WIECKOWSKI20211885}.

The computational problem of deciding if a nonogram has a solution is \textsf{NP}-complete, as was
first shown by Uada and Nagao~\cite {ueda96}; see also \cite
{hoogeboom14,rijn12}.
Note that this of course implies that computing this solution is also at least \textsf{NP}-hard.
Uada and Nagao additionally prove that, given a nonogram and a solution, deciding if this solution is unique is also \textsf{NP}-hard, via a parsimonious reduction from three-dimensional matching.

In contrast,
Batenburg and Kosters~\cite{BATENBURG20091672} gave a polynomial-time algorithm that, given a sequence of partially settled cells and a corresponding description, 
finds a cell that is either filled or unfilled in every possible solution,
if such a cell exists.
Their procedure can be used to either solve a given nonogram in polynomial time or decide that it is not simple.




\subsection {Curved Nonograms}
\begin {figure}[tbp]
\subfloat[]{%
	\includegraphics[page=1]{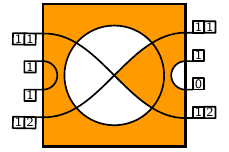}
}\hfil
\subfloat[]{%
	\includegraphics[page=2]{figures/difficulty}
}\hfil
\subfloat[]{%
	\includegraphics[page=3]{figures/difficulty}
}
\caption 
{
	Three types of curved nonograms of increasing
	complexity~\cite
	{kjplvk-dagjpp-19}, shown with solutions.
	\sublab{a} \emph {Basic} puzzles have no popular faces.
	\sublab{b} \emph {Advanced} puzzles may have popular faces, but no self-intersections.
	\sublab{c} \emph {Expert} puzzles have self-intersecting curves.
	%
}
\label {fig:difficulty}
\end {figure}

Van de Kerkhof~\etal~\cite {kjplvk-dagjpp-19} introduced {\em curved} nonograms, a variant in
which the puzzle is no longer played on a grid but on any arrangement
of 
curves (an example is shown in Figure~\ref {fig:difficulty}); see also~\cite {DBLP:conf/cccg/Kreveld18}.
For distinction, we will refer to the nonograms played on a grid as \emph{classic nonograms}. 
In curved nonograms, the numbers of filled
faces of the arrangement in the sequence of faces that
appear \change{along a} side \change{of a} curve\change{, are specified by a description (one on each side)}.
Curved nonograms 
allow
cells with more organic shapes than classic nonograms, and thus lead to clearer or more specific
pictures.
Van de Kerkhof~\etal focus on heuristics to automatically generate such puzzles from a desired solution picture by extending curve segments to a complete curve arrangement.

Additionally, they 
define three
different levels
of complexity
of  curved nonograms
--- not in terms of how hard it is to \emph{solve} a puzzle, but how hard it is to understand the rules (see Figure~\ref {fig:difficulty}).
It turns out that these difficulty levels nicely correspond with properties of the underlying curve arrangement as observed by De Nooijer~\etal~\cite
{
  DBLP:journals/jgaa/NooijerTWMMLR24}
  (see \cite{DBLP:conf/gd/NooijerTWMMLR23} for the conference version).
Specifically, \emph{basic} curved nonograms are \change{exactly} the puzzles in which each description corresponds to a sequence of {\em distinct} faces.
The analogy with descriptions in classic nonograms is straightforward. 
In an \emph {advanced} curved nonogram, a face may be incident to the same curve multiple times, but only on the same side,
and therefore a face can appear more than once in a sequence.
If such a face is filled it is also counted multiple times when checking consistency with a description; in particular, it is no longer true that the sum of the numbers in a description is equal to the total number of filled faces incident to the curve.
\emph {Expert} curved nonograms may have descriptions in which a single face is incident to the same curve on \emph {both} sides (which corresponds to the presence of a self-intersecting curve in the arrangement).

Research on curved nonograms has so far focused on their production.
Klute, L\"offler and N\"ollenburg~\cite{KLUTE2021101791} investigate the geometric problem of adding descriptions to the ends of curves and provide 
polynomial-time algorithms for restricted cases and hardness results for the general problem. 
De Noojier~\etal~\cite {
  DBLP:journals/jgaa/NooijerTWMMLR24} aim to eliminate
all faces with multiple incidences to the same curve
 (so-called \emph{popular faces}) 
 from a nonogram by adding one additional curve to the arrangement.
The same goal was recently pursued in a Dagstuhl seminar working group~\cite{buchin_et_al:DagRep.12.2.17}, which aimed to do so by reconfiguring the curve arrangement through local crossing resolution.


\subsection {Contribution}

In this paper, we investigate for the first time the computational problem of {\em solving} curved nonograms.
In particular, we investigate how the concept of {\em simple} nonograms translates to curved nonograms.
After some preliminaries in Section~\ref{sec:prelims}, we present in Section~\ref{sec:advanced} a dynamic program which leverages the nested structure of popular faces in advanced nonograms to check for a given sequence of faces along a curve, some of which are already filled or unfilled, if it can still be extended to a solution that is consistent with a given description.
This implies a procedure solving simple advanced curved nonograms in \change{$O(l^7)$ time, where $l$ is the length of the longest description.
This runtime can be improved to $O(l^6)$ by using an additional top-down phase of the dynamic program.}
In the case the nonogram is basic, the dynamic program coincides with a special case of the one presented by Batenburg and Kosters~\cite{BATENBURG20091672}, showing that simple basic curved nonograms can be solved in the same way as simple classic nonograms.
Then Section~\ref{sec:expert} shows that self-intersecting curves likely make curved nonograms significantly harder to solve, since even simple curved expert nonograms are at least as hard to solve as 
classic nonograms with a guaranteed unique solution.
We close with some further research questions in Section~\ref{sec:conclusion}.


\section {Preliminaries}\label{sec:prelims}

In this section we introduce the basic concepts and notation as well as the basic problems, which naturally arise in the context of solving nonograms.

\subsection{Nonograms}

Let $\cal A$ be a curve arrangement consisting of $h$ curves $A_1, \ldots, A_h$ all contained in and starting and ending at a rectangle called the \emph{frame}.
Every 
piece of a curve $A$ between two consecutive intersections (or the start or end of $A$) is a curve segment of~$A$.
A face of $\cal A$ (also called cell) is \emph{popular} if two or more curve segments incident to the face belong to the same curve.
Every cell initially has the value \emph{\change{unsettled}}
which we 
denote with $\desc{?}$.
If a value is assigned one of the two values \emph{empty} ($\desc{0}$) or \emph{filled} ($\desc{1}$) we say that the cell is \emph{settled}.
Here we follow the notation of Batenburg and Kosters~\cite {BATENBURG20091672}.

We choose an arbitrarily orientation for each curve;
\change{accordingly,}
a face $f$ 
\change{incident} 
to a curve segment $s$ is said to be \emph{on the left} or \emph{on the right} side of~$s$.
Let $s_1, s_2, \ldots, s_k$ be the curve segments of a curve $\ell$ in $\mathcal A$.
We call the list of faces $f_1, \ldots, f_k$, s.t. $f_i$ is on the right (left) of $s_i$ the \emph{right (left) sequence} $S^r_\ell$ ($S^l_\ell$) of $\ell$.
Popular faces 
can appear multiple times in the same sequence,
and if 
\change{$\ell$ is a} 
self-intersecting curve, faces can appear in both sequences.

\change{A specification for a sequence $S$ is a string}
$\Psi^S = \psi_1\change{\psi_2} \ldots \psi_k \in 
\change{\{\desc{0},\desc{1},\desc{?}\}}
^k$, \change{and it encodes the current state of knowledge about the faces in the sequence.} 
If $\Psi^S$ contains no $\desc{?}$, then it is a \emph{fix}. 
If the sequence in question is clear from context, we may omit the superscript.
If for two specifications $\Psi$ and $\Psi'$
\change{of the same sequence~$S$}  
%
it holds that either $\psi_i = \desc{?}$ or $\psi_i = \psi'_i$
\change{for all~$i$}, we say that $\Psi'$ \emph{refines} $\Psi$.

A description $D = d_1, \ldots, d_t$ 
is a list of $t$ numbers.
One such number $d_i$ will 
be called a \emph{clue} of $D$.
A fix $\Psi$ of $S$ is \emph{consistent with $D$} if and only if $\Psi$ contains exactly $t$ maximal blocks of consecutive $\desc{1}$s and the $i$-th 
block consists of exactly $d_i$ $\desc{1}$s. 
Since these blocks are maximal,
\change{consecutive}
blocks 
are separated by one or more $\desc{0}$s.
A 
specification $\Psi$ of $S$ is \emph{consistent with $D$} if 
there exists a fix that is consistent with $D$ and refines $\Psi$.

\change{
In a curved nonogram, a face can appear more than once 
along a curve.
This leads to additional constraints in the form of equations $\psi_i=\psi_j$.
We encode this by a sequence of letters $f_1\ldots f_l$, like                
$\texttt{ab}\texttt{cd}\texttt{efd}\texttt{b}\texttt{gb}\texttt{h}$,
where repeated letters indicate positions that belong to 
the same face.
For example, the 2nd, 8th, and 10th edges lie on a common face, marked~\texttt{b}.
We call this the \emph{letter description} of the sequence.} 

\change
{We will only consider specifications $\Psi
$ 
that fulfill all equality constraints.
%
}

A \emph{curved nonogram} $C$ consists of a curve arrangement together with a set of descriptions and specifications (one for each sequence in $C$ respectively).
If all specifications are fixes, we say the nonogram is solved and conversely \emph{solving} a given nonogram means obtaining a fix for every specification that is consistent with its description.
For any $i\leq j$ we write $i\ldt j$ for the list of numbers between $i$ and $j$ (including both).

\subsection{Settling and Nonogram Complexity}
Given a specification $\Psi$, which is consistent with a description $D$, obtaining a specification $\Psi'$ that refines $\Psi$ and is still consistent with $D$ is called \emph{making progress on $\Psi$}.
The procedure $\settle(\Psi, D)$ takes a specification and a description and (if possible) returns a specification $\Psi'$, which refines $\Psi$ and is consistent with $D$.
It does so by settling any \change{unsettled} cells to be filled (or empty) if they have the same value in all possible fixes which refine $\Psi$ and are consistent with~$D$.
This procedure is illustrated in Figure~\ref{fig:settle}.

\begin{figure}[tbp]
    \centering
    \includegraphics{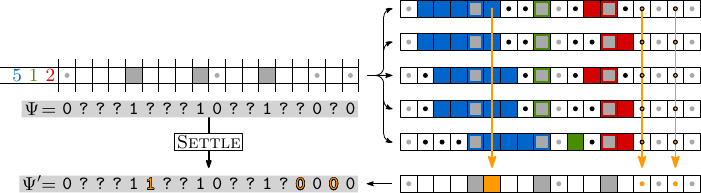}
    \caption{
    An example of the \settle function.
    Filled cells are shown with a filled square, empty cells with a dot all other cells are \change{unsettled}.
    \change{For example, in} a row with specification $\Psi$ and description $D =$ 5-1-2 (shown in the top left) there are five possible fixes of $\Psi$ consistent with $D$ shown in the top right (cells already settled in $\Psi$ are colored gray, newly settled ones black).
    The blocks corresponding to the three parts of $D$ are indicated in corresponding colors.
    One cell is filled and two are empty over all five fixes, which is indicated with yellow entries.
    }
    \label{fig:settle}
\end{figure}

Note that there can be exponentially many such fixes.
However, while \settle is defined via equality of the value of a cell over all possible fixes, an implementation of \settle does not necessarily need to enumerate all possible fixes to find such a cell.
For example in a classic nonogram, the dynamic program of Batenburg and Kosters~\cite{BATENBURG20091672} finds such a cell in polynomial time or decides that no such cell exists.
Applying \settle to all rows and columns of a nonogram until no progress can be made is called a \fullsettle.

If every specification of a nonogram has a fix consistent with its description it is \emph{solvable} and correspondingly we will call a nonogram in which every specification is a fix the \emph{solution} of the nonogram.
If a nonogram is solvable via a \fullsettle it is called \emph{simple}.
We remark that this definition is in line with~\cite{BATENBURG20091672}, whose algorithm can solve simple classic nonograms in polynomial time.

	\section {Solving Simple Advanced Curved Nonograms}
    \label{sec:advanced}

    In this section we present a dynamic program which, given a sequence $S$ together with a specification $\Psi$ and a description $D$ 
    \change{decides} 
    in polynomial time if there exists a fix $\Psi'$ consistent with $D$ that refines $\Psi$.
    This is \change{analogous} to the existing dynamic program by Batenburg and Kosters for classic nonogram~\cite{BATENBURG20091672}. readers familiar with their work will easily spot the parallels; however, the presence of popular faces requires the maintenance of an additional data structure.
    The application of our algorithm to simple basic curved nonograms (i.e., those without popular faces) is discussed at the end of the section.

\change{
The  property of advanced nonograms that is crucial for us is that the equality constraints are properly nested:
\begin{observation}
\label{obs:nested}
    Let $i, j, k$ and $l$ be four indices of letters in the letter description of a sequence $S$ belonging to a curve $A$ in an advanced curved nonogram, such that $f_i = f_j \ne 
    f_k = f_l $. 
    W.l.o.g. assume $\min(i,j,k,l) = i$ and $k<l$.
    Then either (i) $j<k \land j<l$ or (ii) $j>k \land j>l$.
\end{observation}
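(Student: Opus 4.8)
The plan is to argue by contradiction, ruling out the single ``interleaved'' arrangement of the four indices that is not already covered by (i) or (ii). Since $i=\min(i,j,k,l)$, the faces force $i<j$, $i<k$, $i<l$ (e.g.\ $i\ne k$ because $f_i=g\ne h=f_k$); together with $k<l$ and $j\notin\{k,l\}$ (again because $f_j=g\ne h$) the only order not satisfying (i) or (ii) is $i<k<j<l$. So it suffices to show that $i<k<j<l$ is impossible. Set $g:=f_i=f_j$ and $h:=f_k=f_l$, with $g\ne h$. I will take $S$ to be the right sequence $S^r_A$, so that $g$ is incident to $A$ on the right at $s_i$ and $s_j$ and $h$ on the right at $s_k$ and $s_l$; the case $S=S^l_A$ is symmetric.

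The main device is a Jordan curve assembled from a piece of $A$ and a chord through $g$. The curve $A$ is a simple arc with both endpoints on the frame, so it splits the interior of the frame into two regions; let $R$ be the one lying to the right of $A$, a topological disk with $A\subseteq\partial R$. Here the \emph{advanced} hypothesis enters: a face incident to $A$ is incident on only one of its sides, so $g$, being incident on the right, lies entirely inside $R$. I would then pick a point $a$ in the relative interior of $s_i$ and a point $b$ in the relative interior of $s_j$ (both lie on $\partial g$), join them by a simple arc $\beta$ whose relative interior is contained in $g$, and let $A_{ab}$ be the sub-arc of $A$ running from $a$ to $b$ along the orientation of $A$ (this is unambiguous since $i<j$). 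Because the relative interior of $\beta$ avoids $A$, the arcs $A_{ab}$ and $\beta$ meet exactly in $\{a,b\}$, so $C:=A_{ab}\cup\beta$ is a Jordan curve, and $C\subseteq\bar R$ since $A_{ab}\subseteq A\subseteq\bar R$ and $\beta\subseteq\bar g\subseteq\bar R$.

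From here the argument is short. The complement of $\bar R$ is connected, unbounded, and disjoint from $C$, hence contained in the unbounded complementary component of $C$; therefore the bounded component $\mathrm{int}(C)$ is contained in $R$, so $\mathrm{int}(C)\cap A=\emptyset$, and since $C\cap A=A_{ab}$ this gives $A\setminus A_{ab}\subseteq\mathrm{ext}(C)$. Now $i<k<j$ puts all of $s_k$ inside $A_{ab}\subseteq C$, whereas $j<l$ puts all of $s_l$ in $A\setminus A_{ab}\subseteq\mathrm{ext}(C)$. The face $h$ is disjoint from $A$ and from $\beta$ (the relative interior of $\beta$ lies in $g\ne h$ and its endpoints lie on $A$), so $h$ lies entirely in $\mathrm{int}(C)$ or entirely in $\mathrm{ext}(C)$. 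On one hand $h$ is incident to $s_k$ on the right of $A$, and since $\mathrm{int}(C)\subseteq R$ abuts the curve point $s_k\subseteq C$ on its right-of-$A$ side, connectedness of $h$ forces $h\subseteq\mathrm{int}(C)$. On the other hand $h$ is incident to $s_l$, which lies in the open set $\mathrm{ext}(C)$, so $h$ meets $\mathrm{ext}(C)$ --- contradiction.

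The anticipated obstacle is purely topological bookkeeping rather than combinatorics: one must verify carefully that $C$ is genuinely a Jordan curve, that $\mathrm{int}(C)$ really sits inside $R$, and that ``incident on the right of $s_k$'' forces membership in $\mathrm{int}(C)$. Each of these follows from the Jordan curve theorem together with the advanced hypothesis that $g$ occupies only one side of $A$ --- and this last point is precisely what fails for expert (self-intersecting) nonograms, consistent with the harder results obtained later in the paper.
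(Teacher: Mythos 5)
Your proof is correct and follows essentially the same route as the paper's: after reducing to excluding the order $i<k<j<l$, both arguments form a Jordan curve from the sub-arc of $A$ between points on the $i$-th and $j$-th segments together with a chord through their common face, and conclude that this curve separates the face at position $k$ from the face at position $l$, contradicting $f_k=f_l$. Your version merely spells out the topological bookkeeping (that $\mathrm{int}(C)$ lies on the right of $A$, etc.) that the paper leaves implicit.
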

\begin{proof}
   We only have to exclude the order 
   $i<k<j<l$.
Assume w.l.o.g.\ that we are considering the left side of~$A$. 
    Let $a$ and $b$ be points on the $i$-th and $j$-th segment of $A$,
    respectively.
    Since these segments lie on a common face, we can connect
    $a$ and $b$ by a curve $B$ in that face, on the left side of $A$.
    Let $A[a,b]$ be the subcurve of $A$ between $a$ and~
    $b$.
    Then $C = A[a,b] \cup B$ is a Jordan curve (simple and closed).
    If $i<k<j<l$, 
    $C$ encloses the face on the left side of the $k$-th segment, but
    it does not enclose the face on the left side of the $l$-th segment.
Hence, these faces cannot be identical,
contrary to our assumption $f_k=f_l$,
and therefore, 
the order $i<k<j<l$ is impossible.
\end{proof}
We mention that the nesting property of \autoref{obs:nested} is the only property on which our algorithm relies. If a curve $A$ has self-intersections but there are no faces that lead to a violation of the nesting property, our algorithm can be applied.
}

 \guenter{the following (original) stmt. is now wrong. Now we only check consistency?}\soeren{Indeed, I thought I had changed that but apparently no. Changed into what we are doing. Progress is covered later}
	\begin{theorem}\label{thm:advanced}
        Consistency of a 
        sequence $S$ of length $l$
        with
        a description $D$ with $\sum_{d\in D} d = k$ can be decided in time $O(k^3l)=O(l^4)$.
	\end{theorem}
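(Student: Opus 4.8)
The plan is to design a dynamic program that sweeps along the sequence $S$ from left to right, processing the edges in order, while respecting the nested equality constraints guaranteed by Observation~\ref{obs:nested}. The state of the DP at a given position needs to record enough information to check consistency with $D$ later: concretely, how many clues of $D$ have already been ``completed'', how far we are into the current block (i.e., how many consecutive $\desc{1}$s immediately precede the current position, if we are in the middle of a block), and whether the current run of $\desc{1}$s is still extendable. A first attempt ignoring popular faces is exactly the Batenburg--Kosters recurrence: a table indexed by (position in $S$, index of the current clue, number of cells already placed in that clue), with transitions that either place a $\desc{0}$ (only allowed if the current clue is not half-finished) or place a $\desc{1}$ (only allowed if the current value of $\Psi$ at this position is compatible, and if we have not exceeded $d_i$). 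This gives $O(kl)$ states and $O(1)$ transitions.

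The new ingredient is handling the equality constraints $f_i=f_j$. By Observation~\ref{obs:nested} these constraints are properly nested, so they behave like matched parentheses: when the sweep first encounters a face $f$ it must ``guess'' and commit to the value $v\in\{\desc{0},\desc{1}\}$ that $f$ will take (this guess must be consistent with $\Psi$ at all occurrences of $f$, which we can precompute), and when the sweep later reaches a subsequent occurrence of $f$, it must use exactly that value $v$. The crucial point is that, because the constraints are nested, at any position the set of ``open'' faces (faces whose first occurrence has been seen but whose last occurrence has not) forms a stack, and only the \emph{innermost} open face can possibly close before any of the outer ones. Hence the DP does not need to remember the committed values of all currently-open faces: it is enough to remember, for the single innermost open face, which value it was committed to — all deeper commitments have already been resolved, and outer ones are irrelevant until the inner ones close. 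Concretely I would augment the state with one extra coordinate recording the committed value ($\desc{0}$, $\desc{1}$, or ``none open at this nesting level''), and a careful bookkeeping of which position is the matching closing occurrence (this matching can be computed in $O(l)$ preprocessing by a single stack pass over the letter description).

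Putting this together, the DP table is indexed by (position $p$ in $S$, index $i$ of the current clue with $0\le i\le t$, number $r$ of $\desc{1}$s already placed in clue $i$ with $0\le r\le d_i$, committed value of the innermost open face). The number of (clue, partial-count) pairs $(i,r)$ is $\sum_i (d_i+1) = k+t \le 2k$, so there are $O(kl)$ relevant combinations; the extra face-value coordinate contributes only a constant factor, and each transition is $O(1)$. A consistent fix exists iff some state at the final position $p=l$ with $i=t$, $r=d_t$ (or $r=0$ with $i=t$ in the degenerate all-empty clue cases), and no open face, is reachable. This yields $O(kl)$ overall — in fact better than the claimed $O(k^3l)$, so to match the statement as written I would present the straightforward version and simply remark that the cruder analysis already suffices; the $O(k^3l)=O(l^4)$ bound follows a fortiori. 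The main obstacle I anticipate is not the asymptotics but the correctness argument for the stack-collapsing observation: one must argue precisely that, given nestedness, forgetting the committed values of all but the innermost open face never causes the DP to accept an infeasible fix nor to reject a feasible one. The key lemma is that whenever a closing occurrence of a face is processed, that face is necessarily the innermost currently-open one, so its committed value is exactly what the augmented state remembers; this is a direct consequence of Observation~\ref{obs:nested} applied to the two faces whose intervals would otherwise interleave.
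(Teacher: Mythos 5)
There is a genuine gap in the central ``stack-collapsing'' step. You claim that, because the equality constraints are nested, it suffices for the sweep state to remember the committed value of \emph{only the innermost} open face, since ``outer ones are irrelevant until the inner ones close.'' But that is exactly the problem: once the inner face closes, the next-outer face becomes innermost again, its later occurrences still lie ahead, and the DP must then check them against its committed value --- which your state has already overwritten. With a letter description such as $\texttt{a\,b\,c\,d}\ldots\texttt{d\,c\,b\,a}$ the set of open faces is a stack of depth $\Theta(l)$, each carrying an independent binary commitment, so a faithful left-to-right sweep needs $2^{\Theta(l)}$ states, not a constant extra coordinate. (The fact that your analysis lands at $O(kl)$, strictly better than the claimed $O(k^3l)$, is a symptom of this: the linear sweep only ever matches \emph{prefixes} of the description, which is the popular-face-free special case.) A second, smaller issue: you propose to precompute the admissible value of a face from $\Psi$ alone, but a face's value is also forced by \emph{which description symbol} $a_i$ each of its occurrences gets matched to, and that alignment is decided by the DP, not precomputable.

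The paper's proof is organized precisely to avoid this state explosion: instead of a linear sweep it builds a hierarchical decomposition of $1\ldt l$ into complete groups, brackets, and progressive sleuths, arranged as a binary composition tree aligned with the nesting. Each interval in the tree has at most one enclosing face whose value must be coordinated when two child intervals are composed, and this single coordination is enforced by the side condition $a_{i'}=a_{i''}$ in the recursion --- no stack of commitments is ever stored. The price is that subproblems are indexed by \emph{general} description intervals $i\ldt i'$ rather than prefixes, giving $O(k^2l)$ subproblems with $O(k)$ choices each, hence $O(k^3l)$. Your prefix-based table cannot absorb the nesting without either an exponential state or essentially reinventing this interval decomposition.
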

	
	\begin{proof}
		In a bottom-up phase of the dynamic program we try to match larger and larger intervals of the specification with larger and larger parts of the description.
        In a subsequent top-down phase we will discover which assignments are consistent with an overall solution.


		We translate $D = d_1d_2\ldots d_t$ to a list $a_1 a_2 \ldots a_k$ of \desc{0}s and \desc{1}, by creating blocks of $d_i$ \desc{1}s for every $1 \leq i\leq t$ and concatenating them with one \desc{0} between consecutive blocks.
        Additionally we artificially pad the list by an extra \desc{0} at the beginning and at the end.
        Note that this assumption implies that every row and column starts and ends with an empty cell.
        Every nonogram can obviously be padded with empty rows and columns to achieve this.
		For example, \desc{5-1-2} is translated to \desc{011111010110}, with the understanding that a \desc{0} has the potential to stretch to an arbitrary larger number of unfilled cells.
        A specification is now consistent with this \desc0-\desc1-string variant of a description if one
        can create two equal strings by
replacing every \desc{?} in the specification with either \desc{0} or \desc{1} and replacing any \desc{0} in the description with one or more \desc{0}.
        The following example shows this for $D =$ 5-1-2.

                
		\smallskip
                
\begin{tabular}[l]{ll@{ }l}
  \desc0-\desc1-string of $D$:&
                $a_1 a_2 \ldots a_i  \ldots a_k $&$=\desc{011111010110}$\\[1ex]
  Specification $\Psi$:&
              $      \psi_1 \psi_2 \psi_3\ldots \psi_j  \ldots \psi_{l-1}\psi_l $&$=\solu{0???1???10??1??0?0}$\\
\end{tabular}
\medskip
		

		In the finished nonogram, all \desc{?}'s should be turned into \desc{0}'s or \desc{1}'s,
		subject to the requirement
		that the resulting sequence fits the specification.
		We want to know whether a particular \desc{?} 
        can be turned only into \desc{0} or only into \desc{1} in \emph{all} possible solutions,
		because then this \desc{?} can be fixed to this value.
        In other words we aim to implement the \settle procedure.

        \change{Recall that}
		the finished solution must satisfy certain equations $\psi_i=\psi_j$ when two edges are incident
		to a common face, \change{which we have}
encoded by a
\change{letter description}
$f_1\ldots f_l$, like                
$\texttt{ab}\texttt{cd}\texttt{efd}\texttt{b}\texttt{gb}\texttt{h}$,
where repeated letters indicate that edges belong to the same
face.
%
\change{According to 
\autoref{obs:nested}},
these repeated occurrences are
\emph{nested}:
The pattern
$\ldots x \ldots y \ldots x \ldots y \ldots$ cannot occur in
 the sequence. 

\begin{figure}[tbp]
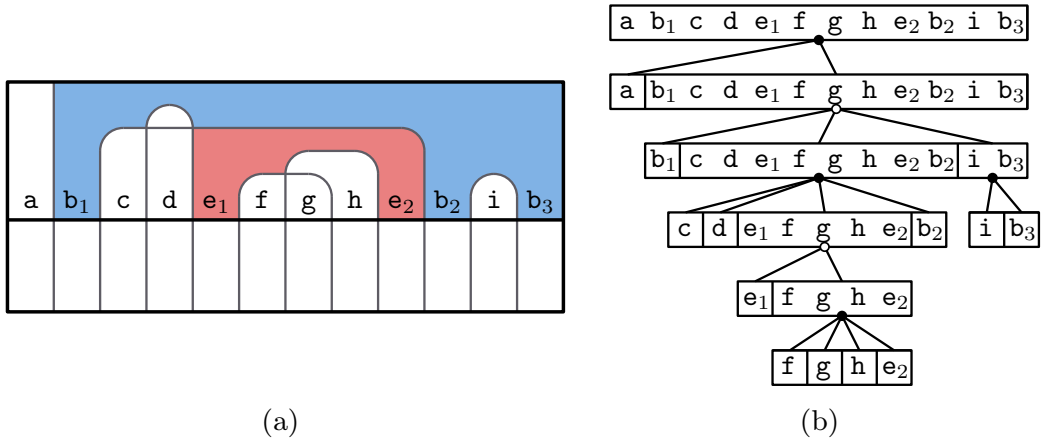

	\centering
	\subfloat[\label{fig:decomp_arrangement}]{%
		\raise 8mm\hbox
        {\includegraphics[scale=.9,page=6]{figures/nonograms}}
	}\hfil
	\subfloat[\label{fig:decomp}]{%
		\includegraphics[scale=.9,page=7]{figures/nonograms}
	}
	\caption{(a) A schematic representation of a curve arrangement indicating the face incidences for top side of the horizontal line and (b) its hierarchical decomposition into
   subintervals.
   For clarity,
  multiple occurrences of the same face (such as
  $\texttt{b}_1,\texttt{b}_2,\texttt{b}_3$) are
  distinguished by 
  indices.                 
  A white node denotes a decomposition
  of a complete group into brackets; a black node
  denotes decomposition of a bracket into complete groups.
 Black and white nodes occur in alternate levels of the tree.
  }
	\label{fig:decomp_all}
\end{figure}

\begin{figure}[htb]
  \centering
   \includegraphics[scale=.9,page=8]{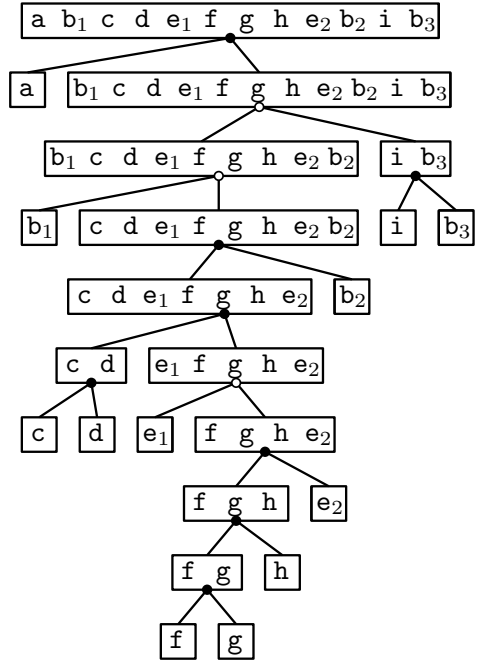}
    \caption{The  binary composition tree $\mathcal{T}$ corresponding to the tree of \autoref{fig:decomp}, 
    in which nodes
      of
    higher degree have been replaced by sequences of binary
    nodes. This is a binary tree whose leaves are the singleton intervals.}
  \label{fig:decomp-binary}
\end{figure}
		
		We solve the following subproblems $\subproblem[i\ldt i',j\ldt j']$, for every
		$1\le i\le i'\le k$ and for a certain set $J$ of
        $2l-1$
        selected intervals $j\ldt j'$ with $1\le j\le j'\le l$:
		\begin{quote}
			Can the \desc{?}'s in $\psi_j\ldt \psi_{j'}$ be turned into \desc0's or \desc1's such that
			the resulting string 
            \change{is consistent with}
            the description $a_i\ldt a_{i'}$?
		\end{quote}
The subproblem
$\subproblem[i\ldt i',j\ldt j']$ results in a Boolean value 
\emph{true} or \emph{false}. Accordingly, we will say that a subproblem
is consistent or inconsistent.
The set $J$ of intervals $j \ldt j'$ of the curve that we consider is defined as
follows.
		
		Suppose that the cells $j_1<j_2<\dots<j_m$
        are
the cells belonging to some common face:
$f_{j_1}=f_{j_2}=\dots=f_{j_m}$.
		We call the interval
		$j_1\ldt j_m$ a \emph{complete group}, and we call the intervals
		$j_1\ldt j_p$, for $p=1,\ldots,m$ the \emph{progressive sleuths}.
		The first progressive sleuth is the singleton
		interval $j_1\ldt j_1$.
		If a face occurs only once along the curve, at position $j$, then the
singleton interval $j\ldt j$ forms a complete group.
		
		An interval $j_p+1\ldt j_{p+1}$
		between two successive occurrences of
		the same face, including the second occurrence but not the first,
		is called a
		\emph{bracket}.
Such a bracket consists of a
(possibly empty) succession of complete groups,
followed by an occurrence of the face of the enclosing group at position $j_{p+1}$.
		%
		%
		We consider also the whole interval $1\ldt l$ as a bracket although it lacks
		the final element of the enclosing group (see Figure~\ref{fig:decomp_all} for an illustration).
		
		We will build up the whole curve $1\ldt l$, starting from singleton
		intervals
		$j\ldt j$.
        These can be seen as the leaves of a binary composition tree $\mathcal{T}$ (see Figure~\ref{fig:decomp-binary})
    This binary tree represents how we {will }{}combine certain pairs of consecutive intervals
		$j\ldt j'$
		and $j'+1\ldt j''$ into larger intervals
		$j\ldt j''$. This is done as follows.
		
		Every complete group is built up from left to right by successive
		addition of brackets:
		\begin{equation*}
			[j_1\ldt j_p] \cup
			[j_p+1\ldt j_{p+1}] =
			[j_1\ldt j_{p+1}]
		\end{equation*}
		Similarly, every bracket is built up from left to right by successive
		addition of the complete groups that make it up (plus the final cell of
		the enclosing group).
        In total, a set $J$ of $2l-1=O(l)$ intervals $j\ldt j''$ are considered. Each such
		interval with $j<j''$ -- an internal node in $\mathcal{T}$ -- is built in a unique way from two disjoint
		subintervals in $J$:
		\begin{equation}
			\label{eq:subintervals}
			[j\ldt j''] =
			[j\ldt j'] \cup
			[j'+1\ldt j'']
		\end{equation}
See Figure~\ref{fig:decomp-binary} for an example.
		
		The singleton subproblems of the form
		$\subproblem[,j\ldt j]$ are trivial to solve:
        For a description of length 1, we have $\subproblem[i\ldt i, j\ldt j] \iff \psi_j=\desc{?} \lor \psi_j=a_i$, while $\subproblem[i\ldt i', j\ldt j]$ is trivially inconsistent for $i<i'$, since a single cell can never be consistent with a description of length larger than 1.
		
Following the decomposition~\eqref{eq:subintervals},
 each larger subproblem of the type
 $\subproblem[,
 j\ldt j'']$ with $j''>j$
is associated to two families of smaller subproblems 
$\subproblem[,
j\ldt j']$ and
$\subproblem[,
j'+1\ldt j'']$,
for some fixed $j'$.
We solve these subproblems by the following recursion:
\begin{equation}
  \label{eq:recursion}
  \begin{aligned}
				\subproblem[i\ldt i'', j\ldt j''] {\iff} 
				&
				\bigvee_{i': i\le i' \le i''-1}\left(
				\subproblem[i\ldt i', j\ldt j']
				\land 
				\subproblem[i'+1\ldt i'', j'+1\ldt j'']
				\land a_{i'}=a_{i''}\right)
				\\&
				\lor
                \bigvee_{i': i\le i' \le i''}\left(
	a_{i'}=0
				\land 
				\subproblem[i\ldt i', j\ldt j']
				\land 
				\subproblem[i'\ldt i'', j'+1\ldt j'']
				\land a_{i'}=a_{i''}\right),
			\end{aligned}
\end{equation}
	where
		the final condition  $a_{i'}=a_{i''}$ is present only in case of
		composing a progressive sleuth with a bracket.
		It ensures that occurrences of the same face have the same color \desc{0} or \desc{1}; 
        when combining two complete groups, there are no shared faces that need to be considered, and the
        condition  $a_{i'}=a_{i''}$ is omitted. 
        
	The first clause considers all possibilities of
        splitting the interval
        $i\ldt i''$ into two disjoint parts
$i\ldt i'$ and $i'+1\ldt i''$. 
		The second clause considers in addition the possibility that the two parts of the curve can use
		overlapping parts of the description if the overlap is a~\desc{0}
		
This completes the description of the bottom-up phase.
The \emph{target problem} $\subproblem[1\ldt k, 1\ldt l]$
describes the original problem: consistency of
the whole specification $\Psi$ with the complete description $D$.        
		
		In total, there are $O(k^2l)$ subproblems, and each subproblem can be
		evaluated by trying $O(k)$ choices for $i'$, for a total running time
of $O(k^3l)$.
\end{proof}

\subsection{Making progress}
Having solved the consistency problem, we immediately get a polynomial-time
solution algorithm for making progress.

\begin{theorem}\label{thm:naive_progress}
    Given a single sequence $S$ of length $l$ and a description $D$ with $\sum_{d\in D} d = k$ we can make progress or decide that no progress can be made in time $O(k^3l^2)=O(l^5)$.
\end{theorem}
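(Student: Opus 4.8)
The plan is to treat \autoref{thm:advanced} as a black box and reduce \emph{making progress} to a small number of consistency queries. First I would run the consistency test on the pair $(\Psi, D)$ itself. If it reports \emph{inconsistent}, then $\Psi$ admits no fix at all, so in particular no progress is possible and we are done; from now on assume $\Psi$ is consistent with $D$.

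Next I would try to pinpoint one cell that the \settle procedure would fix. For each position $j$ with $\psi_j = \desc{?}$, I form two refinements of $\Psi$: the specification $\Psi^{\desc{0}}_j$ obtained by setting position $j$ to \desc{0}, and $\Psi^{\desc{1}}_j$ obtained by setting it to \desc{1}. One could additionally set every position carrying the same letter as $j$ to the chosen value, but this is unnecessary: the recursion~\eqref{eq:recursion} already forces $a_{i'}=a_{i''}$ at any two edges lying on a common face, so feeding $\Psi^{\desc{0}}_j$ or $\Psi^{\desc{1}}_j$ into the dynamic program automatically propagates the value to every occurrence of $f_j$. I then test both refinements for consistency with $D$ using \autoref{thm:advanced}. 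Because $\Psi$ is consistent, every fix refining $\Psi$ and consistent with $D$ assigns position $j$ one of the two values, so at least one of the two tests succeeds. If exactly one succeeds --- say $\Psi^{\desc{1}}_j$ is consistent and $\Psi^{\desc{0}}_j$ is not --- then position $j$ equals \desc{1} in every fix refining $\Psi$ consistent with $D$, so \settle may set $\psi_j:=\desc{1}$; outputting this refinement constitutes progress. If for \emph{every} \desc{?}-position both tests succeed, then no cell takes a common value over all admissible fixes, \settle changes nothing, and we correctly report that no progress can be made. (Iterating over all positions and updating the working specification after each successful settling realizes the full \settle procedure within the same time bound, which is what will be needed later for \fullsettle.)

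Correctness is immediate from the definition of \settle: a cell can be settled to a value $v$ exactly when forcing it to the opposite value renders the instance inconsistent, which is precisely what the paired tests detect; and \settle makes progress iff at least one cell can be settled. For the running time, I perform one initial test plus at most two tests per \desc{?}-position, i.e. at most $2l+1$ calls to \autoref{thm:advanced} at cost $O(k^3l)$ each, with $O(l)$ bookkeeping per position, for a total of $O(k^3l^2)$. To match the claimed $O(l^5)$ it suffices to observe that a fix of a length-$l$ sequence has at most $l$ filled cells, so $k\le l$ whenever the instance is consistent (and if $\sum_{d\in D}d>l$ we report inconsistency at once).

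I do not expect a genuine obstacle here --- the algorithm is deliberately ``naive'' --- but the two points that need a little care are (i) verifying that forcing a single occurrence of a popular face indeed constrains all its occurrences, which follows from the equality handling already built into~\eqref{eq:recursion}, and (ii) the elementary bound $k\le l$ used to turn $O(k^3l^2)$ into $O(l^5)$.
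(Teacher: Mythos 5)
Your proposal is correct and matches the paper's own argument: tentatively set each \desc{?} to \desc{0} and to \desc{1}, test consistency with the algorithm of \autoref{thm:advanced}, and settle the cell whenever exactly one option survives, for at most $O(l)$ consistency calls at $O(k^3l)$ each. Your added remarks --- that the recursion's $a_{i'}=a_{i''}$ condition already propagates a forced value to all occurrences of a popular face, and that $k\le l$ in any consistent instance --- are accurate elaborations of points the paper leaves implicit.
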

\begin{proof}
    We tentatively set a \desc{?} letter to \desc{0} or \desc1 and check
    consistency again. If one of the options is inconsistent, then we know
    that \desc?
    must be replaced by the other letter, thus making progress.
    This is repeated for each of the at most $l$ occurrences of \desc{?}.
\end{proof}

However, we can solve this more efficiently and avoid the additional factor $l$ by a top-down phase,
in which we
mark certain subproblems as
		\emph{extensible}.

\begin{theorem}\label{thm:app:top_down_progress}
    Given a single sequence $S$ of length $l$ and a description $D$ that describes $k$ ones we can make progress or decide that no progress can be made in time $O(k^3l)=O(l^4)$.
\end{theorem}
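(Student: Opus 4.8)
The plan is to augment the bottom-up dynamic program of Theorem~\ref{thm:advanced} with a top-down pass that propagates, from the target problem $\subproblem[1\ldt k, 1\ldt l]$ down the composition tree $\mathcal{T}$, a Boolean flag marking each subproblem $\subproblem[i\ldt i', j\ldt j']$ as \emph{extensible}: true precisely when this subproblem is consistent \emph{and} some consistent choice for it participates in a consistent assignment for the whole curve. Concretely, I would first recompute (or store, during the bottom-up phase) the table of consistent subproblems. I then initialise the top-down phase by declaring $\subproblem[1\ldt k, 1\ldt l]$ extensible. Processing the nodes of $\mathcal{T}$ in top-down order, whenever a node $j\ldt j''$ with children $j\ldt j'$ and $j'+1\ldt j''$ is reached, I look at every disjunct of recursion~\eqref{eq:recursion} that (a) is itself consistent, i.e.\ both child subproblems on the right-hand side are consistent and the side condition $a_{i'}=a_{i''}$ (when present) holds, and (b) whose parent subproblem $\subproblem[i\ldt i'', j\ldt j'']$ is already marked extensible; for each such disjunct I mark the two child subproblems $\subproblem[i\ldt i', j\ldt j']$ and $\subproblem[i'+1\ldt i'', j'+1\ldt j'']$ (or $\subproblem[i'\ldt i'', j'+1\ldt j'']$ in the second clause) as extensible. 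This is the exact dual of the bottom-up recursion, so the bookkeeping mirrors~\eqref{eq:recursion} line for line.

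Once the top-down phase finishes, the characterisation I would prove is: a cell $\psi_j=\desc{?}$ can be settled to value $v\in\{\desc{0},\desc{1}\}$ in every solution if and only if \emph{no} leaf subproblem $\subproblem[i\ldt i, j\ldt j]$ with $a_i=\bar v$ is marked extensible (where $\bar v$ is the opposite value). In other words, reading off, for each singleton leaf $j\ldt j$, the set of description-indices $i$ such that $\subproblem[i\ldt i, j\ldt j]$ is extensible tells us exactly which values are still attainable at position $j$ across all global solutions; if that set only contains indices with $a_i$ equal to one fixed value, that value is forced. The correctness argument has two directions. For soundness: an extensible leaf $\subproblem[i\ldt i,j\ldt j]$ means, by downward induction along the path from the root to this leaf in $\mathcal{T}$, that the local choices made along that path glue together (via the nesting/equality conditions already verified in~\eqref{eq:recursion}) into a fix of the whole specification consistent with $D$ in which $\psi_j=a_i$; hence both values are genuinely realisable when both kinds of leaf are extensible. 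For completeness: given any global solution, its restriction to each interval in $J$ yields a consistent subproblem, and the decomposition~\eqref{eq:subintervals} lets us read off, at every internal node, which disjunct of~\eqref{eq:recursion} the solution uses; a straightforward induction from the root shows every subproblem so used gets marked extensible, so in particular the leaf carrying that solution's value at position~$j$ is extensible.

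For the running time: the bottom-up phase is $O(k^3 l)$ by Theorem~\ref{thm:advanced}. The top-down phase visits each of the $O(k^2 l)$ subproblems once, and at each internal node of $\mathcal{T}$ it again ranges over $O(k)$ choices of the split index $i'$ to decide which children to mark, for a total of $O(k^3 l)$; reading off the settled cells from the extensible leaves is $O(kl)$. Hence the overall bound is $O(k^3 l) = O(l^4)$, since $k\le l$. I would also note that making progress in the sense of \settle then amounts to: run the two phases, scan all $\desc{?}$ positions, and fix those whose extensible-leaf index set is monovalued; this settles all cells that are forced, matching the specification of \settle. The main obstacle I anticipate is getting the top-down marking rule exactly right at the \emph{white} versus \emph{black} nodes of the decomposition — i.e.\ correctly handling the two clauses of~\eqref{eq:recursion}, the optional side condition $a_{i'}=a_{i''}$, and the shared index $i'$ in the "overlapping \desc{0}" clause — and then proving that the extensibility flags so defined coincide with "participates in some global solution"; the geometric nesting property (Observation~\ref{obs:nested}) is what guarantees that local gluings along a root-to-leaf path never conflict, and the proof must invoke it at exactly that point.
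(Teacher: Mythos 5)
Your proposal is correct and follows essentially the same route as the paper: a top-down pass that marks subproblems as extensible by running recursion~\eqref{eq:recursion} in reverse from the target problem $\subproblem[1\ldt k,1\ldt l]$, followed by reading off, at each leaf $j\ldt j$ with $\psi_j=\desc{?}$, whether all extensible leaves $\subproblem[i\ldt i,j\ldt j]$ agree on $a_i$. Your correctness sketch and the $O(k^3l)$ accounting match the paper's argument.
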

\begin{proof}
We call a subproblem
		$\subproblem[i\ldt i',j\ldt j']$ is \emph{extensible} if it is
		consistent 
		and in
		addition,
		some solution that fits the description $a_i\ldt a_{i'}$ can be extended to
		a complete solution by setting the remaining \desc{?}'s outside the
		substring $b_j\ldt b_{j'}$ appropriately.
		
		We begin by marking the target problem
		$\subproblem[1\ldt k, 1\ldt l]$,
		as extensible, assuming it is consistent. 
		Then we use the recursion~\eqref{eq:recursion} in reverse.
		If $\subproblem[i\ldt i'', j\ldt j'']$ is extensible, then,
		if any of the parenthesized clauses on the right-hand
                side
of~\eqref{eq:recursion}
                holds for
		some $i'$, we mark the two
corresponding subproblems $\subproblem[i\ldt i', j\ldt j']$ and
		$\subproblem[i'(+1)\ldt i'', j'+1\ldt j'']$ as extensible.

		Finally we look at each \change{unsettled} position $j$ with $b_j=\desc{?}$, and we
		check for which description positions $i$
		the problem $\subproblem[i\ldt i, j\ldt j]$ is extensible.
		If all extensible problems among these have $a_i=\desc{0}$, we can conclude that
		$b_j$ must be set to \desc{0}, and settle an \change{unsettled} color
		in this way.
		Similarly, if all extensible problems have $a_i=\desc1$,
		we can fix the \change{unsettled} value $b_j$ to \desc1 at this position.
\end{proof}

		



    Theorem~\ref{thm:naive_progress} (or Theorem~\ref{thm:app:top_down_progress}) can be used to obtain the following corollary by the simple fact that there are only a linear number of rows and columns and cells in the nonogram.
    After applying the dynamic program once to every row and column, we must have made progress on at least one sequence, so there is at most an overhead of $O(l^2)$.
    
    \begin{corollary}
	    Simple advanced curved nonograms can na\"ively be solved in time~$O(l^7)$, or in time $O(l^6)$ using the top-down phase of Theorem~\ref{thm:app:top_down_progress}.
    \end{corollary}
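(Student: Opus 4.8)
The plan is to bound the total work of a \fullsettle on a simple advanced curved nonogram by combining the per-sequence cost from Theorem~\ref{thm:naive_progress} (respectively Theorem~\ref{thm:app:top_down_progress}) with an amortization argument over the number of \settle calls. First I would fix the parameter: let $l$ be the length of the longest sequence (equivalently, the largest number of faces along any one side of a curve), and note that the number of sequences is $O(h) = O(l)$, since each curve contributes two sequences and a curve can be incident to at most $l$ faces along each side. Likewise the total number of faces, and hence the total number of cells across all specifications, is $O(l^2)$: each of the $O(l)$ sequences has length at most $l$, and every face shows up in a bounded number of sequences. For the per-call cost, Theorem~\ref{thm:naive_progress} gives $O(k^3 l^2) = O(l^5)$ for one sequence (using $k \le l$), and Theorem~\ref{thm:app:top_down_progress} improves this to $O(k^3 l) = O(l^4)$.

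Next I would set up the outer loop. One ``round'' of the \fullsettle consists of applying \settle once to every sequence; this costs $O(l)\cdot O(l^5) = O(l^6)$ per round with the na\"ive per-call bound, or $O(l)\cdot O(l^4) = O(l^5)$ per round with the top-down version. The key observation, stated in the paragraph preceding the corollary, is that as long as the nonogram is not yet solved, a round must settle at least one previously unsettled cell --- otherwise the \fullsettle has reached a fixed point without a solution, contradicting simplicity. Since there are only $O(l^2)$ cells in total, at most $O(l^2)$ rounds can make progress, and one further round certifies that no more progress is possible. Therefore the total running time is $O(l^2)$ rounds times the per-round cost, i.e. $O(l^2) \cdot O(l^6) = O(l^8)$ na\"ively --- wait, this overshoots; the correct accounting is $O(l^2)$ rounds $\times\ O(l^6)$ per round, which would be $O(l^8)$, so instead I should count more carefully: rather than running a full round between successes, I re-apply \settle only to sequences containing a newly settled cell. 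Each of the $O(l^2)$ cell-settling events touches $O(1)$ sequences, triggering $O(1)$ \settle calls each of cost $O(l^5)$ (or $O(l^4)$), for a total of $O(l^2)\cdot O(l^5) = O(l^7)$ (or $O(l^2)\cdot O(l^4) = O(l^6)$), plus one final sweep of all $O(l)$ sequences at cost $O(l)\cdot O(l^5)=O(l^6)$ (or $O(l^5)$) to confirm termination; this sweep is dominated. This yields the claimed $O(l^7)$ and $O(l^6)$ bounds.

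The main obstacle --- and the step that needs the most care --- is the bookkeeping that converts ``at least one cell is settled per round'' into the tighter event-driven bound without paying a spurious factor of $l$. Concretely, one must argue that after a cell in a face $f$ is settled, only the (constantly many) sequences in which $f$ appears can have their \settle-outcome changed, so that re-running \settle on exactly those sequences suffices to propagate the consequences. This uses the fact that \settle$(\Psi,D)$ depends only on $\Psi$ and $D$ for that one sequence, together with the equality constraints recorded in the letter description, which are exactly what links the $O(1)$ sequences sharing a face. A secondary point to verify is that the padding convention from the proof of Theorem~\ref{thm:advanced} (adding an empty cell at the start and end of every sequence, and padding the arrangement accordingly) does not change $l$ by more than a constant factor, so the asymptotic bounds are unaffected. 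Everything else is routine substitution of the per-sequence running times into this amortized count.
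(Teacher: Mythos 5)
Your overall strategy is the same as the paper's: the paper disposes of this corollary in one sentence, amortizing the number of \settle calls against the number of cells that get settled, for an $O(l^2)$ overhead on top of the per-call costs of Theorems~\ref{thm:naive_progress} and~\ref{thm:app:top_down_progress}. Your event-driven accounting is a more careful version of that, and you are right that a na\"ive ``one full sweep per settled cell'' count overshoots to $O(l^8)$ once one admits $\Theta(l^2)$ cells. However, the step you lean on to avoid this --- ``every face shows up in a bounded number of sequences,'' so each settling event triggers only $O(1)$ new \settle calls --- is not justified and is false in general: in a curve arrangement a single face can be bounded by $\Theta(h)$ distinct curves and hence appear in that many sequences. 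The repair is standard and you should amortize per sequence rather than per event: a sequence $S_i$ of length $l_i$ needs to be re-settled only when one of its own cells has just been settled (by some other sequence sharing that face), which can happen at most $l_i$ times, so the total number of \settle calls is at most $\sum_i (1+l_i) = O(l)\cdot O(l) = O(l^2)$; equivalently, $\sum_f \deg(f)$ over all faces $f$ is bounded by the total length of all sequences, not by $O(1)$ per face. With that substitution your bound goes through. A second, shared weakness: both you and the paper implicitly assume the number of curves (hence sequences) is $O(l)$; your stated justification (``a curve can be incident to at most $l$ faces along each side'') bounds sequence lengths, not the number of curves, so this is really an assumption about the input parameters rather than a derived fact.
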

    
	\subsection {Back to Basic Nonograms}

When our algorithm is applied to a curve in a basic nonogram,
there are no groups, and the whole sequence is just one
bracket whose sequence of  ``complete groups'' consists of
	singletons.
The decomposition tree degenerates, and    
the algorithm simply grows the intervals $1\ldt i$ and  $1\ldt j$
by adding one symbol at at time.
Here our dynamic program reduces to the seminal 
 algorithm of Batenburg and Kosters~\cite{BATENBURG20091672}
(which is actually more general because it can deal with 
    a specified \emph{range} of lengths for each \desc{1}-block instead of a fixed length).
	

	\section {Solving Simple Expert Curved Nonograms}
    \label{sec:expert}

    In this section we show that solving a simple curved expert nonogram is at least as hard as finding the solution to a not necessarily simple classic nonogram provided that the classic nonogram has a unique solution.
    \change{\begin{myproblem}[\UNP]\label{prob:nonogram_unique}
        Given a classic nonogram $N$ \change{with} the guarantee it has a unique solution, find the solution for the nonogram $N$.
    \end{myproblem}}

    Note that  testing if a classic nonogram has a solution is \textsf{NP}-hard in general~\cite{ueda96}.
    This of course implies that finding such a solution is also \textsf{NP}-hard.
    Ueda and Nagao~\cite{ueda96} also show that testing if a given nonogram has more than one solution, even if we are given a solution, is 
    \textsf{NP}-hard.

    \begin {figure}[tbp]
  \centering \includegraphics {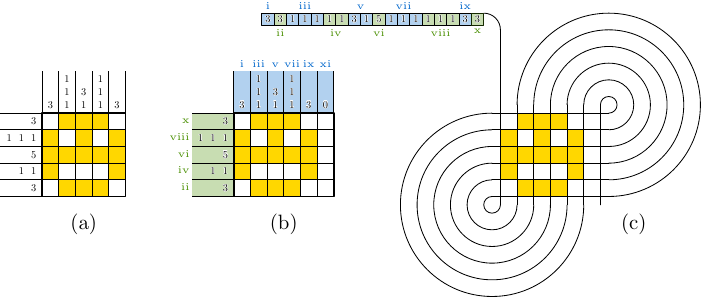}
  \caption
  { (a) A classic nonogram with $w=5$ columns and $h=5$ rows.  (b) \change{In the reduction we pad a given nonogram to guarantee it has one more column than rows. The result here is a}
    padded nonogram with $w=6$ columns and $h=5$ rows. \change{The hints are annotated with the order in which they are collected into} (c) a single
    self-intersecting {\em vital} curve that contains all grid lines
    of the classic nonogram \change{and all descriptions in one \emph{vital} description}. 
    }
  \label{fig:hard1} 
\end {figure}

    However this does not directly imply that finding a solution for a classic nonogram is still \textsf{NP}-hard if we are guaranteed that it has a unique solution.
    By providing a reduction from \UNP{} to \textsc{Simple Curved Expert Nonogram} we still show that finding the solution to a simple curved expert nonogram is at least as hard as solving \UNP{}.

\cutout{    
The high-level overview of our reduction is as follows.
    We describe the construction of a curved expert nonogram $C$ based on a given not-necessarily-simple classic nonogram $N$.
    The existing structure of $N$ (after some additional padding to obtain a classic nonogram of desired proportions) can be replicated using a single \emph{vital} curve, which is \change{built} by alternatingly connecting row and column boundaries of $N$ with circular arcs.
    This is illustrated in Figure~\ref{fig:hard1}.
    
    If $N$ has a unique solution, then $C$ will equally have only a single unique solution.
    One can show that $C$, constructed in this way is simple exactly if $N$ had a unique solution.
    To guarantee this equivalence we have to deal with the fact that unifying all rows and columns into one big sequence risks that parts of the filled cells which satisfy a clue of the description can be placed outside of the original row or column.
    To avoid this, the construction is padded with a sufficient number of necessarily filled cells around and intersecting the vital curve, s.t., large sequences of necessarily filled cells are inserted into the sequence restricting the remaining clues (which correspond to the original descriptions of $N$) to their correct rows and columns (see Figure~\ref{fig:hard2} for an illustration).
    
    Now, given $C$, filling the trivially filled cells in the outer parts of the nonogram \change{(the filled orange cells on the right in Figure~\ref{fig:hard2})} yields a partially filled simple expert curved nonogram, in which all cells, which are not yet colored in, are part of the right sequence $S^r_\ell$ along a single curve $\ell$; in fact they are part of both the right and the left sequence, but one such sequence is sufficient.
    Moreover the specification $\Psi$ of this sequence includes already filled chains of cells \change{(orange hints in the vital description on the right in Figure~\ref{fig:hard2})} and there is a one-to-one correspondence between any chain of unsettled cells to a row or column of the input classic nonogram \change{(recall also Figures~\ref{fig:hard1}b and c)}.
    Given any fix which refines $\Psi$ and is consistent with the description of $S^r_\ell$ will fill in all remaining cells of $C$ and immediately yields a solution for $N$.

    This implies that any polynomial time algorithm which would compute progress on a single sequence of a simple curved expert nonogram -- such an algorithm would again imply a polynomial time scheme to solve these nonograms -- would immediately imply a polynomial time algorithm for \UNP{}.

	\begin {figure}[tbp]
		\centering
		\includegraphics[scale=.7]{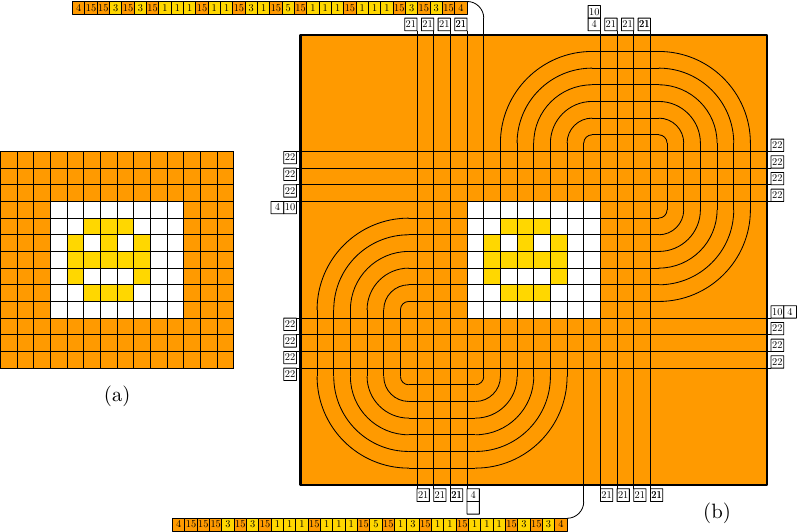}
		\caption {
			(a) An even more padded version of the nonogram from Figure~\ref {fig:hard1}.
			(b) The final construction including $k=3$ additional rows and columns of filled cells on all sides. Original filled cells are yellow; padding cells are orange. In the vital description, numbers are orange if they are at least $2k$ and yellow otherwise.}
		\label{fig:hard2} 
	\end {figure}

        Containment in \textsf{NP}-is rather straight-forward since consistency of a fix with a description can be checked in polynomial time and therefore correctness of a given solution can be verified efficiently.
        The result is summarized in the following theorem.

        \begin{restatable}{theorem}{hardness}\label{thm:problem_equivalence}
Solving \textsc{Simple Curved Expert Nonogram} is (a) at least as hard as \UNP{} and (b) in \textsf{NP}.
\end{restatable}

    \change{We 
    note that the construction of our hardness proof 
    produces
a \emph{simple curve arrangement}: No three curves intersect in the same point, no two curves touch 
    without crossing, and no curves locally overlap in more than a single point.}

        }

\subsection {High Level Overview}


The high-level overview of our reduction is as follows.
    We first describe the construction of a curved expert nonogram $C$ based on a given not-necessarily-simple classic nonogram $N$.
    If we are guaranteed that $N$ has a unique solution, then $C$ will equally have only a single unique solution and additionally we will show that $C$ is simple.
    Next we argue that the sequences along all but one curve can be trivially filled (using the \settle procedure) by simply filling all sequences whose description require the entire sequence to be colored black.
    This will yield a partially filled curved expert nonogram, in which all cells, which are not yet colored in, are part of the right sequence $S^r_\ell$ along a single curve $\ell$.
    Moreover the specification $\Psi$ of this sequence includes already filled chains of cells and there is a one-to-one correspondence between any chain of unsettled cells to a row or column of the input classic nonogram.
    Given any fix which refines $\Psi$ and is consistent with the description of $S^r_\ell$ will fill in all remaining cells of $C$ and immediately yields a solution for $N$.

    Since $C$ is simple, it can be solved with an application of $\fullsettle$.
    Therefore a polynomial time algorithm for $\fullsettle$ on curved expert nonograms would imply that classic nonograms can be solved in polynomial time, if we are guaranteed that their solution is unique.

    \subsection{Constructing the curved expert nonogram}\label{sec:hardness_construction}
    Consider a classic nonogram $N$ with $w$ columns and $h$ rows, as in Figure~\ref {fig:hard1}(a).
    We will assume w.l.o.g. that  we have $w = h+1$; this can be achieved through appropriate padding, see Figure~\ref {fig:hard1}(b).
	Note that adding empty (or completely filled) rows or columns does not change the difficulty of the puzzle.
    All cells of $N$ will also be contained in the curved expert nonogram $C$ we created based on~$N$.
    We will call these cells the \emph{original cells}.
    
	Now, conceptually, we will trace a single {\em vital} curve through all $w+1$ vertical and $h+1$ horizontal line segments that make up the grid of the puzzle (excluding the section that contains the descriptions); refer to Figure~\ref {fig:hard1}(c).
	Doing this will concatenate the descriptions from all rows and columns of the original nonogram into a single description; specifically, it will intersperse the descriptions of the columns (from left to right) and the rows (from top to bottom). We will refer to the resulting description as the {\em vital} description.

However, this alters the difficulty of the puzzle, as the information which sections of the vital description belong to separate rows or columns is lost.
		To solve this, we again pad the original nonogram, but now with rows and columns, which we will force to be entirely colored in in a solution of $C$ as follows.

	\begin {figure}[tbp]
		\centering
		\includegraphics[scale=.7]{figures/hard2}
		\caption {
			(a) An even more padded version of the nonogram from Figure~\ref {fig:hard1}.
			(b) The final construction including $k=3$ additional rows and columns of filled cells on all sides. Original filled cells are yellow; padding cells are orange. In the vital description, numbers are orange if they are at least $2k$ and yellow otherwise.}
		\label{fig:hard2} 
	\end {figure}
    
		We let $k = 1 + \max \left( \lfloor \frac w2 \rfloor, \lfloor \frac h2 \rfloor \right)$;
		this value is chosen to ensure that $2k$ is more than either $w$ or $h$.
		We first construct another padded  $w + 2 + 2k$ by $h + 2 + 2k$ grid: the original grid with a single empty and $k$ full rows added on all sides. Refer to Figure~\ref {fig:hard2}(a).
        All filled/empty cells that are added by this procedure are called the filled/empty \emph{padding cells}.
        
		Then, we construct a curved nonogram $C$ which consists of this grid surrounded by some additional potentially non-rectangular cells (which will be called \emph{boundary cells}).
        In total, it consists of $4k + 5$ curves: $k+1$ straight lines on each side of the input picture, plus \desc{1} very long curve $\ell$ which contains all original grid lines.
		Refer to Figure~\ref {fig:hard2}(b).
        Note that by construction all descriptions which consists of a single clue require their entire sequence to be filled.
        Settling all cells of these sequences to be filled also uniquely determines a fix for all sequences with descriptions consisting of two clues and we state the following observation.

        \begin{observation}\label{obs:other_descriptions}
            Any sequence other than $S^r_\ell$ has a description of length one or two.
            Moreover all boundary and filled padding cells can trivially be settled to be filled and all empty padding cells can trivially be settled to be empty.
            Every unsettled cell is an original cell and contained in $S^r_\ell$.
        \end{observation}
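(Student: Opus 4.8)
The plan is to examine the $4k+5$ curves of $C$ one at a time. The curve $\ell$ is the only ``long'' one; each of the other $4(k+1)$ curves is a straight line added by the padding, i.e.\ a grid line of the padded $(w+2+2k)\times(h+2+2k)$ grid that does not bound an original cell and hence lies entirely in the padding region (see Figure~\ref{fig:hard2}). For the first statement I would note that each side of such a straight line is traversed, in order, by the cells of a single row or column of the padded grid, possibly preceded and followed by a bounded strip of boundary cells. By construction this row or column has at most two maximal blocks of filled cells: it is either entirely filled (a \emph{full} padding line) or of the form ``$k$ filled, some empty, $k$ filled'' (an \emph{empty} padding line), and prepending or appending the filled boundary strip merges with the adjacent filled block rather than creating a new one. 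Hence every sequence of every curve other than $\ell$ has a description of length one or two; since $S^r_\ell$ and $S^l_\ell$ consist of the same faces, in particular so does every sequence other than $S^r_\ell$.

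For the other two claims I would invoke \settle in two passes. In the first pass, settle every sequence whose description is a single clue; for a full padding line (and for a maximal filled boundary strip) this clue equals the length of the sequence, so \settle is forced to fill all of it, settling every filled padding cell and every boundary cell to \desc{1}. In the second pass, settle every sequence whose description is $k,k$; after the first pass the $k$ cells at each end of such a sequence are already \desc{1}, so they form exactly the two named blocks, and maximality of blocks forces every cell between them to \desc{0}, settling every empty padding cell to \desc{0}. After the two passes the only faces that can still carry \desc{?} are the original cells, since every other face is a (now settled) padding or boundary cell. Finally, each original cell is incident only to original grid lines, all of which lie on $\ell$, so each original cell appears in $S^r_\ell$ (indeed in both $S^r_\ell$ and $S^l_\ell$), which is the last claim.

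I expect the main subtlety to be the ordering of the two passes: a ``$k,k$'' sequence becomes determined only because the full padding cells adjacent to its endpoints were already fixed in the first pass, so I must argue this dependency is acyclic (full padding lines never depend on empty ones) and that no cell is left half-settled in between. A secondary bookkeeping point that must be checked is that $\ell$ really absorbs all $w+1$ vertical and $h+1$ horizontal original grid lines, so that no original grid line is realized as one of the straight-line curves; this is exactly what guarantees that no original cell lands on a short sequence and that the split of the faces into ``padding/boundary, already settled'' versus ``original, still unsettled'' is exactly as the observation states.
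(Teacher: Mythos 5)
The paper offers no standalone proof of this observation: it is asserted as an immediate consequence of the construction, with precisely the justification you spell out (descriptions consisting of a single clue force their entire sequence to be filled, and settling those sequences determines the fixes of the two-clue sequences). Your case analysis of the $4(k+1)$ straight-line curves, the two-pass application of \settle, and the remark that the dependency of the ``$k,k$'' lines on the fully filled lines is acyclic all match the intended argument and are fine.

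There is, however, one genuine gap. The sentence ``since $S^r_\ell$ and $S^l_\ell$ consist of the same faces, in particular so does every sequence other than $S^r_\ell$'' is a non sequitur. The observation quantifies over \emph{all} sequences other than $S^r_\ell$, and that set contains $S^l_\ell$, which is a sequence of the vital curve and not of any straight line, so it is not covered by your analysis of the padding lines. Worse, the premise you invoke works against you: if $S^l_\ell$ really contained the same faces as $S^r_\ell$ --- in particular all original cells, as you also assert in your final paragraph --- then its description would record the block structure of the solution as seen from the left side of $\ell$ and would in general have far more than two clues. What actually has to be argued (and what the paper itself also leaves implicit) is a property of the routing of the connecting arcs: either the left side of the vital curve faces only boundary and filled padding cells, so that its description collapses to one or two clues, or the observation must be read as ranging only over the sequences of the other curves. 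As written, your argument establishes the claim for every sequence of every curve other than $\ell$, but not for $S^l_\ell$, and this is exactly the one case where the claim is delicate.
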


        Next we consider the vital description $D$.
        Note that we can partition $D$ into $2(w+h+4)+1$ (possibly empty) parts, s.t., these parts alternatingly correspond to the description of a column or row of $N$ and clues which require $4k-1$ consecutive filled cells (with the exception of the first and last part, which require exactly $k+1$ filled cells).
        We call the parts requiring $4k-1$ cells \emph{blockers}.
        Since the first and last cell in the vital sequence are filled, the first and last clue of the vital description are necessarily already fulfilled.
        Note two things.
        First there is a matching of already settled cells along the vital sequence and blockers, s.t., all blockers are fulfilled and second there are either $w+2$ or $h+2$ unsettled cells between two consecutive chains of $4k-1$ already filled cells and therefore no blocker can be fulfilled in such a space.
        Therefore this matching is the only possible realization of the blockers and we know that every chain of unsettled original cells in $C$ has to accommodate exactly the clues that the original row or column in $N$ had to realize.
        With this we state the following observation.

        \begin{observation}\label{obs:solution_equivalence}
            If we restrict the solution of $C$ to its original cells, we obtain exactly the solution of $N$.
        \end{observation}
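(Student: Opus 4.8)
The plan is to derive \autoref{obs:solution_equivalence} by assembling the facts established just above it into a bijection between the solutions of $C$ and the solutions of $N$ that acts as the identity on the original cells. First I would invoke \autoref{obs:other_descriptions}: in every solution of $C$, all boundary cells and filled padding cells are filled, all empty padding cells are empty, and every remaining cell is an original cell lying in the single sequence $S^r_\ell$. Hence a solution of $C$ is completely determined by its values on the original cells, so restriction to the original cells is an injection from the set of solutions of $C$ into the set of $\{\desc{0},\desc{1}\}$-assignments of the original cells, and it remains only to show that the image of this injection is exactly the solution set of $N$.

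For the forward direction I would make precise how $D$ and $S^r_\ell$ decompose. Walking along $S^r_\ell$, the cells that \autoref{obs:other_descriptions} forces to be filled form maximal \emph{chains}, and between two consecutive chains lies a \emph{gap} that, by the routing of the vital curve, consists exactly of the cells of one row or column of $N$ in the order the curve visits them; correspondingly $D$ splits into the alternating list of blockers (runs of $4k-1$ forced filled cells), the two length-$(k+1)$ end parts, and the genuine row/column descriptions of $N$, in the matching order. The key step — essentially the argument given in the paragraph preceding the statement — is that in \emph{any} solution each blocker clue is realized by exactly one of these chains: a chain already contains a run of at least $4k-1$ filled cells, while a gap contains only $w+2$ or $h+2$ unsettled cells and $2k>\max(w,h)$ gives $4k-1>w+2$ and $4k-1>h+2$, so no length-$(4k-1)$ block of filled cells can lie inside a gap or bridge two chains. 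Counting then forces a one-to-one correspondence between blocker clues and chains; each blocker block coincides with its chain and therefore has length \emph{exactly} $4k-1$, which in turn forces the original cell flanking each chain to be empty. Consequently the clues of $D$ lying strictly between two consecutive blockers — the description of one row or column of $N$ — must be realized entirely within the interior of the corresponding gap, i.e. within that very row or column of $N$. Thus the restriction of any solution of $C$ to the original cells satisfies every row and column description of $N$ and so is a solution of $N$; a simpler version of the flanking argument disposes of the two length-$(k+1)$ end parts.

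For the converse I would take the (by hypothesis unique) solution of $N$, assign the original cells accordingly, and give all other cells their forced values from \autoref{obs:other_descriptions}; one then checks that the resulting specification is consistent with every description of $C$: the blockers and the two end parts of $D$ are satisfied by the forced chains, the remaining clues of $D$ by the solution of $N$, and all descriptions other than that of $S^r_\ell$ by \autoref{obs:other_descriptions}. Hence this assignment is a solution of $C$, the injection above is a bijection, $C$'s solution is unique, and its restriction to the original cells is the solution of $N$ — which is the claim. I expect the main obstacle to be the bookkeeping inside the forcing step: one must verify that the forced chains have length exactly $4k-1$ rather than merely at least that, that two distinct gaps are genuinely separated by a full chain (so that blockers cannot share a chain), and that the routing of the vital curve really lists the cells of each gap — and the clues of each inter-blocker part of $D$ — in consistent orders, so that ``the $i$-th inter-blocker part of $D$ describes the $i$-th gap'' holds literally. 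Once that is pinned down, both directions are routine.
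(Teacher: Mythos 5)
Your proposal follows essentially the same route as the paper, which justifies this observation not by a formal proof but by the preceding paragraph's argument that the blockers of the vital description can only be realized by the chains of forced filled padding cells (the gaps of $w+2$ or $h+2$ original cells being too short), so that each inter-blocker part of $D$ must be realized within its corresponding original row or column. You are more careful than the paper — in particular you add the converse direction showing that the solution of $N$ lifts to a solution of $C$, which the paper leaves implicit — but the key idea and decomposition are identical.
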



        \begin{lemma}\label{lem:C_is_simple}
            If $N$ has a unique solution the constructed curved expert nonogram $C$ also has only a single solution.
            Moreover $C$ is simple.
        \end{lemma}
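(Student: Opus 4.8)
The plan is to prove the two assertions separately, using Observations~\ref{obs:other_descriptions}, \ref{obs:solution_equivalence} and the consistency algorithm of Theorem~\ref{thm:advanced}/\ref{thm:app:top_down_progress} as black boxes. First I would establish uniqueness: suppose $N$ has a unique solution $\sigma$. By Observation~\ref{obs:solution_equivalence}, any solution of $C$ restricted to the original cells is a solution of $N$, hence equals $\sigma$ on the original cells. By Observation~\ref{obs:other_descriptions} the boundary cells and filled padding cells are forced to be filled and the empty padding cells are forced to be empty in every solution, so a solution of $C$ is completely determined on all non-original cells as well. Therefore $C$ has at most one solution; and since $N$ is solvable and $C$ was built to realize $\sigma$, it has exactly one.

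Next I would show $C$ is simple, i.e.\ that \fullsettle{} completes it. The argument proceeds in two stages. Stage one: by Observation~\ref{obs:other_descriptions}, every sequence other than $S^r_\ell$ has a description of length one or two. A length-one description whose clue equals the sequence length forces the whole sequence to be filled; this is caught by a single \settle{} call. Once all such fully-filled sequences are settled, every remaining unsettled cell lies in $S^r_\ell$, and also in any length-two sequence through it at least one block is pinned down by the now-filled padding cell, so \settle{} resolves those sequences too. Hence after finitely many \settle{} calls the only unsettled cells are original cells, all lying on $S^r_\ell$, with a specification $\Psi$ consisting of long blocks of already-filled cells (the blockers and padding) separated by the $w{+}2$- or $h{+}2$-cell windows that host the original rows/columns. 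Stage two: I would argue that \settle{} applied to $S^r_\ell$ then drives the puzzle to completion. The key structural fact — already argued in the construction — is that because $2k > w,h$, no blocker (of length $4k-1$) can be accommodated inside a window of $w{+}2$ or $h{+}2$ unsettled cells, so in every fix consistent with $D$ and refining $\Psi$ the blockers sit exactly where the already-filled cells are; consequently the clues inside each window are precisely the description of the corresponding row or column of $N$, and the window acts like an independent classic row/column. Since $N$ has a unique solution, that row/column (given the settled state) has a uniquely determined filling; because the fix is unique, \emph{every} cell of $S^r_\ell$ has the same value in all consistent fixes, so one application of \settle{} to $S^r_\ell$ settles all of them. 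By Theorem~\ref{thm:app:top_down_progress} this \settle{} step is effective (polynomial time), so \fullsettle{} terminates with a complete specification: $C$ is simple.

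The main obstacle is the second stage: one must be careful that "uniqueness of the solution of $N$'' really forces \emph{cell-wise} agreement across all fixes of the \emph{concatenated} sequence $S^r_\ell$. The subtlety is that $S^r_\ell$ is a single long line whose description $D$ does not a priori know the window boundaries; the work is to show that the blocker-placement argument (no blocker fits in a short window, the first/last clues are pre-satisfied, so the matching of blockers to filled chains is forced) pins the window boundaries in every consistent fix, thereby decoupling $S^r_\ell$ into independent classic lines whose fillings are exactly the rows/columns of $N$. Once that decoupling is rigorously in place, uniqueness of $\sigma$ immediately gives a unique fix of $S^r_\ell$, and simplicity follows. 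I would also note in passing that the equality constraints coming from $\ell$ being self-intersecting play no adverse role here, since the only cells shared between the two sides of $\ell$ are already settled by Observation~\ref{obs:other_descriptions}.
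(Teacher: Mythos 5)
Your first part (uniqueness of the solution of $C$) is fine and matches the paper: non-original cells are forced by Observation~\ref{obs:other_descriptions}, original cells are forced by Observation~\ref{obs:solution_equivalence} plus uniqueness of $N$'s solution.

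The simplicity argument, however, contains a genuine error in stage two, precisely at the point you flag as the ``main obstacle.'' You argue that the blockers decouple $S^r_\ell$ into \emph{independent} classic lines and then claim that ``since $N$ has a unique solution, that row/column (given the settled state) has a uniquely determined filling.'' That inference is false: global uniqueness of $N$'s solution does not imply that each individual row or column has a unique filling consistent with its own description — if it did, every uniquely solvable classic nonogram would be solvable line by line without iteration, and the whole reduction would be pointless. If the windows really were independent, \settle on $S^r_\ell$ would only reproduce single-line reasoning on $N$ and would \emph{not} settle all cells of a non-simple $N$. The correct argument — which the paper's proof uses, if tersely — goes the other way: the windows are \emph{not} independent, because each original cell occurs more than once in the vital sequence (once in its row window and once in its column window), and the face-equality constraints $\psi_i=\psi_j$ tie these occurrences together. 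Consequently the fixes of $S^r_\ell$ that refine $\Psi$, are consistent with $D$, and respect the equality constraints are in bijection with the solutions of $N$; uniqueness of the latter gives a \emph{single} consistent fix, and \settle (which by definition settles every cell whose value agrees over all consistent fixes) therefore settles everything in one call. Your closing remark that the equality constraints ``play no adverse role here, since the only cells shared \dots are already settled'' is exactly backwards: the unsettled original cells are the shared ones, and the equality constraints are the engine of the proof, not a nuisance. A secondary issue: you invoke Theorem~\ref{thm:app:top_down_progress} to make this \settle call ``effective (polynomial time),'' but that theorem relies on the nesting property of advanced nonograms, which fails for the self-intersecting vital curve; simplicity only requires that \fullsettle logically completes the puzzle, not that it runs in polynomial time — indeed the hardness of implementing this very \settle call is the point of the reduction.
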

        \begin{proof}
            The first part of the lemma follows as a direct consequence of Observations~\ref{obs:other_descriptions} and~\ref{obs:solution_equivalence}. 
            Over all solutions of $C$, any padding and boundary cell in $C$ can have exactly one value (filled or unfilled) and any original cell can have at most as many values as the corresponding cell in $N$ over all valid solutions of $N$.
            If this solution of $N$ is unique, every original cell can have only one such value.

            The second part of the lemma statement is a consequence of Observation~\ref{obs:other_descriptions}.
            Since the value of all cells, which are not part of the vital sequence can trivially be settled, if we would apply the $\settle$ procedure to the vital sequence, we would settle all remaining cells, since they can have only one value in a solution (because this solution is unique).
        \end{proof}

        \subsection{Correctness}
        We are now ready to prove 
        the main theorem.

                \begin{restatable}{theorem}{hardness}\label{thm:problem_equivalence}
Solving \textsc{Simple Curved Expert Nonogram} is (a) at least as hard as \UNP{} and (b) in \textsf{NP}.
\end{restatable}
        \begin{proof}
            To prove statement (a) it suffices to show that we can construct $C$ based on a given nonogram $N$ in polynomial time and given a solution of \textsc{Simple Curved Expert Nonogram}, i.e., a filled version of $C$, we can construct a solution to \UNP{} for the instance $N$ in polynomial time.
            The first part is immediate as the construction as described in Section~\ref{sec:hardness_construction} which yields $C$ based on a given $N$ adds only a polynomially many cells to $N$ and the curves can be obtained by connecting at most a polynomial number of grid lines.
            Since $N$ has a unique solution by definition of \UNP{}, it follows from Lemma~\ref{lem:C_is_simple}.
                        
            The second part, i.e., constructing a solution for $N$ based on a given solution for $C$ follows from Observation~\ref{obs:solution_equivalence}.
            By simply settling all cells in $N$ according to the value of the original cells in $C$, we obtain the solution.

            To prove statement (b) it suffices to observe that, given a solution for \textsc{Simple Curved Expert Nonogram}, we can enumerate all polynomially many sequences, and check in polynomial time if their fix is consistent with their description.
            This concludes the proof.
        \end{proof}

        As previously mentioned we only show with Theorem~\ref{thm:problem_equivalence} that \textsc{Simple Curved Expert Nonogram} is at least as hard as \UNP{}.
        While it seems reasonable to expect \UNP{} to be \textsf{NP}-hard (equivalent to the generalized problem, i.e., finding a solution to any classic nonogram), the exact complexity of \UNP{} remains an open question.
        
    \change{We 
    note that the construction of our hardness proof 
    produces a simple curve arrangement, i.e., there are no three curves intersection in the same point, no two curves touch 
    without crossing, and no curves locally overlap in more than a single point.}

    \maarten {Thinking about it more, I wonder if the proper definition of "simple" should be "if the puzzle has a unique solution, then it can be found by iteratively applying settle on a single description". This maybe more closely matches the intuition behind a simple nonogram (i.e., if you can't make progress then it could be because the puzzle has no solution, or it has multiple solutions, or it has a unique solution but is not simple, but you don't really care) and then we could have cleaner theorem statements, at the cost of more subtle definitions. Anyway, probbaly too close to the deadline to get philosophical now.}

	\section {Conclusions}
	\label{sec:conclusion}
    
	We have shown that the concept of {\em simple} nonograms extends to curved nonograms to some extent.
	In general, simple curved nonograms are not necessarily easy to solve: even the problem of testing for progress on a single description is already as hard as solving a classic nonogram under the assumption that it 
    has a unique solution \change{and while we suspect this problem to be \NP-hard, the complexity of solving a simple curved expert nonogram remains an open question.}
	However, for the restricted classes of {\em basic} and {\em advanced} curved nonograms, we show that simple puzzles can be solved in polynomial time.
    It would be of interest how other measures of difficulty like the ones proposed by Batenburg and Kosters~\cite{DBLP:journals/icga/BatenburgK12} extend to curved nonograms.
	
	\bibliographystyle{abbrvurl}
	\bibliography{impop}

\appendix

\end{document}